\newcommand{\reg}[1]{{\textsf{#1}}}
  \newtheorem{theorem}{Theorem}
  \newtheorem{lemma}[theorem]{Lemma}
  \newtheorem{definition}[theorem]{Definition}
\newtheorem{corollary}[theorem]{Corollary}
\newtheorem{remark}[theorem]{Remark}
\newcommand{\tit}{A simple protocol for verifiable delegation of quantum computation in one round}
\title{\tit} %TODO Please add
\author{Alex B. Grilo}
\affiliation{CWI and QuSoft, Amsterdam, The
Netherlands}
\newcommand{\beq}{\begin{eqnarray}}
\newcommand{\eeq}{\end{eqnarray}}
\newcommand{\ket}[1]{|#1\rangle}
\newcommand{\bra}[1]{\langle#1|}
\newcommand{\kb}[1]{|#1\rangle\langle#1|}
\newcommand {\Tr} {\ensuremath{ \mathrm{Tr} }}
\newcommand {\trace} [1] {\fn{\Tr}{#1}}
\newcommand{\Es}[1]{\ensuremath{\mathop{\textsc{E}}}_{#1}}
\newcommand {\eqdef} {\ensuremath{ \stackrel{\mathrm{def}}{=} }}
\newcommand{\uX}{{\sf X}}
\newcommand{\uI}{{\sf I}}
\newcommand{\uZ}{{\sf Z}}
\newcommand{\oX}{\sigma_X}
\newcommand{\oI}{\sigma_I}
\newcommand{\oZ}{\sigma_Z}
\newcommand{\setft}[1]{\mathrm{#1}}
\newcommand{\Obs}{\setft{Obs}}
\DeclareMathOperator{\poly}{poly}
\newcommand{\real}{\ensuremath{\mathbb{R}}}
\newcommand{\complex}{\ensuremath{\mathbb{C}}}
\newcommand{\T}{\ensuremath{\mathcal{T}}}
\newcommand{\mH}{\mathcal{H}}
\newcommand{\eps}{\varepsilon}
\newcommand{\EPR}{\Phi_{00}}
\newcommand{\aux}{\textsc{aux}}
\newcommand{\energy}{\textsc{ET}}
\newcommand{\et}{\energy{}}
\newcommand{\pbt}{\textsc{PBT}}
\newcommand{\ver}{\textsc{V}}
\newcommand{\inr}{\in_R}
\newcommand {\minusspace} {\: \! \!}
\newcommand{\class}[1]{\ensuremath{\mathsf{#1}}}
\newcommand{\QMA}{\class{QMA}}
\newcommand{\Zplus}{\ensuremath{\mathbb{Z}^+}}
\newcommand{\fn}[2]{\ensuremath{ #1 \minusspace \br{ #2 } }}
\newcommand{\br}[1]{\ensuremath{ \left( #1 \right) }}
\newcommand{\binary}{\{0,1\}}
\newcommand\norm[1]{\left\lVert#1\right\rVert}
\newcommand\defclass[5]{
\begin{definition}[#1]\label{#2}
#3
\begin{description}
  \item[Completeness.] #4
  \item[Soundness.] #5
\end{description} 
\end{definition} 
}
\begin{document}

\maketitle

%TODO mandatory: add short abstract of the document
\begin{abstract}
  The importance of being able to verify quantum computation delegated to
  remote servers increases with recent development of quantum technologies.
  In some of the proposed protocols for this task, a client delegates her 
  quantum computation to  non-communicating servers in multiple rounds of communication. In this work, we propose the first protocol where the client delegates her quantum computation to two servers in one-round of communication. 
  Another advantage of our protocol is that it is conceptually simpler than previous protocols. The parameters of our protocol also make it possible to prove security even if the servers are
  allowed to communicate, but respecting the plausible assumption that information cannot be
  propagated faster than speed of light, making it the first relativistic protocol for quantum computation.
\end{abstract}

\section{Introduction}
With the recent progress in the development of quantum technologies, large-scale
quantum computers may be available in a not-so-distant future.  Their costs and
infrastructure requirements make it impractical for them to be ubiquitous,
however clients could send their quantum computation to be performed remotely by a quantum
server in the cloud~\cite{Castelvecchi17}, broadening the use of  quantum
advantage to solve computational problems (see Ref. \cite{Montanaro16} for such
examples).  For the clients, it is a major concern whether the quantum servers
are performing the correct computation and quantum speedup is really being
experienced.

In order to solve this problem, we aim a protocol for verifiable delegation of
quantum computation where the client exchanges messages with the server, and, at
the end of the protocol, either the client holds the output of her computation,
or she detects that the server is defective.  Ideally, the client is a classical
computer and an honest server only needs polynomial-time quantum computation to
answer correctly. Protocols of this form could also be used for validating
devices that claim to have quantum computational power, but in this work we
focus on the point of view of verifiable delegation of computation.

There are efficient protocols that can perform this task if the model is
relaxed, for instance giving limited quantum power and quantum communication to
the client~\cite{FitzsimonsK12,AharonovBOEM17,Broadbent15,Morimae14,MorimaeF16} or if the quantum server is polynomially bounded~\cite{Mahadev18,ACGH20}.
In this work, we focus on a second line of protocols, where a classical client
delegates her computation to non-communicating quantum servers. Although the
servers are supposed to share and maintain entangled states, which is feasible
in principle but technologically challenging, these protocols are
``plug-and-play'' in the sense that the client only needs classical
communication with the quantum servers.

Following standard notation in these protocols, we start calling the client and
servers by verifier and provers, respectively.  The security of such protocols
relies on the so called self-testing of non-local games.  We consider games
where a verifier interacts with non-communicating provers by exchanging one
round of classical communication and, based on the correlation of the provers'
answers, the verifier decides to accept or reject. The goal of the provers is to
maximize the acceptance probability in the game and they can share a common
strategy before the game starts.  A game is non-local~\cite{Bell64} whenever
there exists a quantum strategy for the provers that achieves acceptance
probability strictly higher than any classical strategy, allowing the verifier
to certify that the provers share some entanglement, if the classical bound is
surpassed.  Self-testing~\cite{MayersY04} goes one step further, proving that if
the correlation of the provers' answers is close to the optimal quantum value,
their strategy  is close to the honest one.

Reichardt, Unger and Vazirani~\cite{ReichardtUV12} used the ideas of
self-testing to propose a verifiable delegation scheme where the verifier
interleaves questions of non-local games and instructions for the computation,
and from the point of view of the provers, these two types of questions  are
indistinguishable. In this case, the correctness of the quantum computation is
inherited by the guarantees achieved in self-testing.  Follow-up works
\cite{McKague16,GheorghiuKW15,HajdusekPDF15,FitzsimonsH15,NatarajanV16,ColadangeloGJV17}
have used the same approach  in order to propose more efficient protocols.

In this work, we present the first one-round protocol for verifiable
delegation of quantum computation.
We also notice that the protocol is conceptually simple, in
contrast with previous protocols that have a rather complicated structure. We
expect that its main ideas can be generalized to other contexts as MIP$^*$
protocols for iterated non-deterministic exponential time and even in new
protocols for delegation of quantum computation.  We also remark that
the parameters of the protocol allow us to replace the unjustified assumption
that the provers do not communicate to a more plausible assumption that the
communication cannot be faster than speed of light. 

Technically, we achieve our protocol by showing a non-local game for Local Hamiltonian
problem, where the verifier plays against two provers in one round of classical
communication. In this game, honest provers perform polynomial time quantum
computation on copies of the groundstate of the Hamiltonian. This non-local game
is of independent interest since it was an open question if a one-round game for
Local Hamiltonian problem could be achieved with only two efficient provers.
This non-local game can be used as a delegation protocol through the circuit-to-hamiltonian construction. \\

\subsection{Our contributions}

\noindent\textbf{New Non-local game for Local Hamiltonians.}
The main technical result of this work is presenting
one-round two-prover game for the Local Hamiltonian problem, where honest
provers only need quantum polynomial time computation, copies of the
groundstate of the Hamiltonian and shared EPR pairs. More concretely,
we show
how to construct a game $G(H)$ based on a $XZ$ Local Hamiltonian\footnote{An $XZ$ Local Hamiltonian is a Hamiltonian that can be decomposed in sum of polynomially many terms that are  tensor
products of Paulis  $\sigma_X$, $\sigma_Z$ and $\sigma_I$}  $H$ acting on $n$
qubits and
the upper and lower bounds on the maximum acceptance probability in $G(H)$
are tightly related to the groundstate energy of $H$.
Then, based on
$G(H)$, we devise a game $\tilde{G}(H)$ such that if the groundstate energy of
$H$ is low, then the maximum
acceptance probability in $\tilde{G}(H)$ is at least $\frac{1}{2} + \Delta$, 
while if the groundstate energy is
high, the acceptance probability in the game is at most
$\frac{1}{2} - \Delta$.
We describe now the main ideas of $G(H)$.

The game is composed by two tests: the Pauli Braiding Test
(\pbt)~\cite{NatarajanV16}, where
the verifier checks if the provers share the expected state and perform
the indicated Pauli measurements, and the Energy Test (\energy{}), where the
verifier estimates the groundstate energy of $H$.

The same structure was used in a different way in the non-local game  for LH
proposed by Natarajan and Vidick~\cite{NatarajanV16} (and implicitly in
Ji~\cite{Ji16}).  In their game, $7$ provers are expected to share the encoding
of the groundstate of $H$ under a quantum error correcting code.  In
\et{}, the provers estimate the groundstate energy by jointly performing the
measurements on the state, while \pbt{}  checks if the provers share a correct
encoding of some state and if they perform the indicated measurements.  The
provers receive questions consisting in a Pauli tensor product observable and
they answer with the one-bit outcome of the measurement on their share of the
state.  The need of $7$ provers comes from the fact that the verifier must test
if the provers are committed to an encoded state and use it in all of their
measurements. It is an open problem if the number of provers can be decreased in this
setup.

In this work, we are able to reduce the number of provers to $2$ by making  them 
asymmetric.  In \et{}, one of the provers holds the groundstate of $H$
and teleports it to the second prover, who is responsible for measuring it.  In
our case, \pbt{} checks if the provers share EPR pairs and if the second
prover's measurements are correct. We remark that no test is needed for the
state, since the chosen measurement is not known by the first prover.
We notice that the size of the answers in our protocol is 
polynomial in $n$, since the verifier needs the teleportation results for every
qubit (in order to hide the measurement). We leave as an open problem if
the size of the answers can be reduced, hopefully achieving constant-size
answers as in NV.
i
We state now the key ideas to upper bound
the maximum acceptance probability  of $G(H)$.  The
behavior of the second prover in $\et{}$ can be verified thanks to $\pbt{}$,
since the two tests are indistinguishable to him.  On the other hand, the first
prover can perfectly distinguish \pbt{} and \energy{}, but he has no information
about the measurement being performed. We show that his optimal strategy is to
teleport the groundstate of $H$, but in this case the acceptance probability is
high iff the groundstate energy is low. 
~\\

\noindent\textbf{Protocol for verifiable delegation of quantum computation.}
The task of verifiable delegation of quantum computation can be easily reduced to estimating the groundenergy of local Hamiltonians through the circuit-to-hamiltonian
construction~\cite{FitzsimonsH15,NatarajanV16}, which has been called {\em post-hoc verification of quantum computation}.  In this construction, a quantum circuit $Q$ is reduced to an instance $H_Q$ of LH, such that $H_Q$
has low groundstate energy  iff $Q$ accepts with high probability. Our non-local game for $H_Q$ can be seen as a delegation protocol, where
the verifier interacts with two non-communicating entangled provers in
one-round of classical communication.

When compared to previous protocols, our result has some very nice properties. First, differently to previous results, our protocol is very simple to state, which could make it easier to be extended to other settings. 
Secondly, using standard techniques in relativistic cryptography, we can replace the unjustified assumption that the two servers do not communicate by the No Superluminal Signaling (NSS) principle: the security of the protocol would only rely that the two servers cannot communicate faster than the speed of light.

The circuit-to-hamiltonian construction also causes an overhead on the resources
needed by honest provers. Namely, in our protocol the provers need $\tilde{O}(n
g^2)$ EPR pairs for delegating the computation of a quantum circuit acting on
$n$ qubits and composed by $g$ gates, while other protocols need
only $\tilde{O}(g)$ EPR pairs~\cite{ColadangeloGJV17}. We leave as an open
problem finding more efficient two-provers one-round protocol for delegating quantum computation.

We also leave as an
open question if it is possible to create a one-round and {\em blind} verifiable
delegation scheme for quantum computation, or proving that this is improbable,
in the lines of Ref. \cite{AaronsonCGK17}.

\subsection*{Organization}

 In
\Cref{sec:preliminaries}, we give the necessary preliminaries, including the definition of the Pauli Braiding Test. Then, in \Cref{sec:game-lh} we present our non-local game for local Hamiltonian problem.

For completeness, we present in \Cref{sec:applications,sec:circ-to-ham} proof of standard theorems in Relativistic cryptography and Hamiltonian complexity that allows us to conclude with the protocol for verifiable delegation of quantum computation.

\subsection*{Acknowledgements}
I thank Iordanis Kerenidis,  Dami\'{a}n Pital\'{u}a-Garc\'{i}a and Thomas Vidick for useful
discussions and comments in early drafts of this manuscript. I also thank anonymous reviewers that pointed me how to improve the presentation of this work.
Part of this work was done when I was member of IRIF, Universit\'{e} Paris Diderot, Paris, France, where I was supported by ERC QCC and  French Programme d’Investissement
d’Avenir RISQ P141580. Partially supported by Supported by ERC Consolidator Grant 615307-QPROGRESS.

\section{Preliminaries}
\label{sec:preliminaries}
We assume basic knowledge on Quantum Computation topics and 
we refer the readers that are not familiar with them to Ref. \cite{NielsenC2011}.

\subsection{Notation}
We denote $[n]$ as the set $\{1,...,n\}$.
For a finite set $S$, we denote $x \in_R S$ as $x$ being an uniformly random
element from $S$.
We assume that all Hilbert spaces are finite-dimensional. 
For a Hilbert space $\mH$ and a  linear
operator $M$ on $\mH$, we denote $\lambda_0(M)$  as its smallest eigenvalue and $\norm{M}$ as its maximum singular value.
An $n$-qubit binary observable $O$ is a  Hermitian matrix with 
eigenvalues $\pm 1$.  We denote $\Obs(\mH)$ as the set of binary
observables on the Hilbert space $\mH$.

We will use the letters X, Z and I to denote questions in multi-prover games,  the letters in the sans-serif font $\uX$, $\uZ$ and $\uI$ to denote unitaries and $\sigma_X$, $\sigma_Z$ and $\sigma_I$ to denote observables such that 
\begin{equation*}\label{eq:pauli-matrix}
\uI = \sigma_I = \begin{pmatrix} 1 & 0 \\ 0 & 1 \end{pmatrix},\qquad \uX =
  \sigma_X = \begin{pmatrix} 0 & 1 \\ 1 & 0 \end{pmatrix}\text{and}\qquad \uZ =
    \sigma_Z = \begin{pmatrix} 1 & 0 \\ 0 & -1\end{pmatrix}.
\end{equation*}

\subsection{Non-local games, Self-testing and the Pauli Braiding Test}
\label{sec:pbt}

We consider games where a verifier plays against two
provers in the following way. The verifier sends questions to the
provers according to a publicly known distribution and the provers answer
back to the verifier. Based on the correlation of the answers, 
the verifier decides to accept or reject according to an acceptance rule that is also publicly known.
The provers share a common strategy before the
game starts in order to maximize the acceptance probability in the game, but
they do not communicate afterwards.

For a game $G$, its classical value $\omega(G)$ is the maximum acceptance
probability in the game if the provers share classical randomness, while the quantum value $\omega^*(G)$ is the maximum acceptance probability if they are 
allowed to follow a quantum strategy,
i.e. share a quantum state and apply measurements on it.
Non-local games (or Bell tests)~\cite{Bell64} are such games where $\omega^*(G) > \omega(G)$ and they have played a major role in Quantum Information Theory, since
they allow the
verifier to certify that there exists some quantumness in the strategy of the
provers, if the classical bound is surpassed.

Self-testing (also known as device-independent certification or rigidity
theorems) of a non-local game $G$ allows us to achieve stronger conclusions by 
showing that if the acceptance probability on $G$ is close to $\omega^*(G)$,
then the strategy of the provers is close to the ideal one, 
up to local isometries.

\subsubsection{Magic Square game}
\label{sec:magic-square}

The Magic Square or Mermin-Peres game~\cite{Mermin90,Peres90},
is a two-prover non-local game where one of the provers is asked a row $r \in
\{1,2,3\}$ and the second prover is asked with a column $c \in \{1,2,3\}$.
The first and second prover answer with $a_1, a_2 \in \{\pm 1\}$ and 
$b_1, b_2 \in \{\pm 1\}$, respectively.
By setting  $a_3 = a_1 \oplus a_2$ and $b_3 = b_1 \oplus b_2$, the provers
win the game if $a_c = b_r$.

If the provers follow a classical strategy, their maximum winning probability in
this game is $\frac{8}{9}$, while we describe now a quantum strategy that makes
them win with probability $1$. The provers share
two EPR pairs and, on question $r$ (resp. $c$), the prover performs the measurements
indicated in the first two columns (resp. rows) of row $r$ (resp. column
$c$) of the following table
\begin{table}[H]
  \center
\begin{tabular}{|c|c|c|}
\hline
$IZ$ & $ZI$ & $ZZ$ \\
\hline
$XI$ & $IX$ & $XX$ \\
\hline
$XZ$ & $ZX$ & $YY$\\
\hline
\end{tabular}
\end{table}
\noindent and answer with the outcomes of the measurements. The values $a_3$ and
$b_3$ should correspond to the measurement of the EPR pairs according to the
third column and row, respectively.

The self-testing theorem proved by Wu, Bancal and Scarani~\cite{WuBMS16}  
states  that if the provers win the Magic Square game with
probability close to $1$, they share two EPR pairs and the measurements
performed are close to the honest Pauli measurements, up to local isometries.

\begin{lemma}\label{lem:ms-rigid}
Suppose a strategy for the provers, using state $\ket{\psi}$ and observables
  $W$, succeeds with probability at least $1-\eps$ in the Magic Square game.
  Then there exist  isometries $V_D:\mH_\reg{D} \to (\complex^2\otimes \complex^2)_{\reg{D'}}\otimes {\mH}_{\hat{\reg{D}}}$, for $D\in\{A,B\}$ and a state $\ket{\aux}_{\hat{\reg{A}}\hat{\reg{B}}} \in {\mH}_{\hat{\reg{A}}}\otimes {\mH}_{\hat{\reg{B}}}$ such that
$$\big\| (V_A \otimes V_B) \ket{\psi}_{\reg{AB}} - \ket{\EPR}_{\reg{A}'\reg{B}'}^{\otimes 2} \ket{\aux}_{\hat{\reg{A}}\hat{\reg{B}}} \big\|^2 = O(\sqrt{\eps}),$$
and for $W\in \{I,X,Z\}^2$,
\begin{align*}
\big\| \big(W -V_A^\dagger \sigma_W V_A\big) \otimes \uI_B \ket{\psi} \big\|^2 &= O(\sqrt{\eps}).
\end{align*}
\end{lemma}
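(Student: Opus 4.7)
My plan is to follow the general blueprint for self-testing non-local games (Mayers--Yao, Reichardt--Unger--Vazirani, and in particular the specialized analysis of Wu--Bancal--Scarani for the Magic Square): first analyze the exact case, extract algebraic relations on the provers' observables from the winning constraints, and then make every step robust to obtain the $O(\sqrt{\eps})$ bounds.

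First I would introduce Alice's observables $A_{r,i}$ ($r\in\{1,2,3\}$, $i\in\{1,2\}$) and Bob's observables $B_{c,j}$ corresponding to the outcomes of the measurements performed on the respective row and column questions, and also define $A_{r,3}=A_{r,1}A_{r,2}$ and $B_{c,3}=B_{c,1}B_{c,2}$. The winning condition of the Magic Square, together with the product constraint on the third column (which forces $A_{3,1}A_{3,2}A_{3,3}=-\uI$ in the ideal strategy), translates into nine correlation equations $\bra{\psi}A_{r,i}\otimes B_{c,i}\ket{\psi}=\pm 1$. Winning with probability $1-\eps$ then yields that each such correlator is $1-O(\eps)$. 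A standard argument (via the fact that $\|M\ket{\psi}\|^2=\bra{\psi}M^\dagger M\ket{\psi}$ for $M=A_{r,i}\otimes\uI-\uI\otimes B_{c,i}$ with $A_{r,i}$ and $B_{c,i}$ binary observables) converts these correlators into state-dependent operator identities $A_{r,i}\otimes\uI\ket{\psi}\approx \uI\otimes B_{c,i}\ket{\psi}$, up to error $O(\sqrt{\eps})$.

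Next I would derive the Pauli algebra. The key point is that the observables in each row (column) can be measured jointly, so their pairwise products satisfy relations that, when transported across the entanglement via the above state-dependent identities, force anti-commutation on Alice's side. Concretely, for the cell corresponding to $\uX\uI$ and the cell corresponding to $\uZ\uI$ in the square, one can combine the constraints of the row and column they sit in to deduce $\{A_{\uX\uI},A_{\uZ\uI}\}\ket{\psi}\approx 0$ in the $O(\sqrt{\eps})$ sense (and similarly for the second qubit's $\uX$ and $\uZ$). Commutation of the two ``first qubit'' operators with the two ``second qubit'' operators follows analogously from joint measurability of the diagonal cells. This gives, on $\ket{\psi}$, two approximately anti-commuting pairs of binary observables that pairwise commute across pairs.

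Finally I would apply the swap isometry of Reichardt--Unger--Vazirani to each qubit: using the extracted $X$- and $Z$-type operators on Alice's side, build $V_A$ mapping $\mH_\reg{A}\to(\complex^2\otimes\complex^2)_{\reg{A'}}\otimes\hat{\mH}_{\hat{\reg{A}}}$ that conjugates Alice's approximate Paulis into the genuine $\sigma_X,\sigma_Z$ on $\reg{A'}$, and similarly for $V_B$. A direct computation, plugging in the approximate anti-commutation/commutation relations, then yields $\|(V_A\otimes V_B)\ket{\psi}-\ket{\EPR}^{\otimes 2}_{\reg{A'B'}}\ket{\aux}_{\hat{\reg{A}}\hat{\reg{B}}}\|^2=O(\sqrt{\eps})$ and the observable-closeness bound for every $W\in\{I,X,Z\}^2$ (the $YY$ cell is handled using $\sigma_Y=i\sigma_X\sigma_Z$ together with the anti-commutation already established). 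The hardest part is the robustness bookkeeping: each time one moves an operator across $\ket{\psi}$ via a state-dependent identity one pays $O(\sqrt{\eps})$, so one must take care that only a bounded number of such substitutions are needed to extract each relation, so that the final bound remains $O(\sqrt{\eps})$ rather than degrading; this is precisely where the careful operator manipulations of \cite{WuBMS16} are essential, and I would import their accounting verbatim.
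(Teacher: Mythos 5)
The paper does not prove this lemma at all: it is imported as a black box from Wu--Bancal--Scarani \cite{WuBMS16}, so there is no in-paper argument to compare against. Your sketch (exact-case algebra from the nine winning constraints, state-dependent consistency, approximate anti-commutation, swap isometries, with the $O(\sqrt{\eps})$ bookkeeping deferred to \cite{WuBMS16}) is the standard route taken in that cited work, so it is consistent with what the paper relies on, while ultimately resting on the same external reference for the quantitative robustness.
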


\subsubsection{Pauli Braiding Test.}
The starting point of our work is the Pauli Braiding
Test(\pbt)~\cite{NatarajanV16}, 
a non-local game that 
allows the verifier to certify
that two provers share $t$ EPR
pairs and perform the indicated
measurements, which consist of tensors of Pauli observables.

We define \pbt{} in details later in this section and here we state the main properties
that will be used in our Hamiltonian game.
In \pbt{},   each prover receives questions in the form $W \in \{X, Z, I\}^t$,
and each one is answered with some $b \in \{-1, +1\}^t$.  For $W \in \{X, Z\}^t$
and $a\in\{0,1\}^t$, we have $W(a) \in \{X, Z, I \}^t$ where  $W(a)_i = W_i$ if
$a_i = 1$ and $W(a)_i = I$ otherwise.  

In the honest strategy, the provers
share $t$ EPR pairs and measure them with respect to the
observable $\sigma_{W} \eqdef \bigotimes_{i \in [t]} \sigma_{W_i}$ on question
$W$.
However, the provers could deviate and perform an arbitrary strategy, 
sharing an entangled state $\ket{\psi}_{AB} \in \mH_\reg{A} \otimes
\mH_\reg{B}$ and performing projective measurements
 $\tau^{A}_W$ and $\tau^{B}_W$ for each possible question $W$.
It was shown that if the provers pass \pbt{} with probability $1-\eps$,
their strategy is, up to local isometries, $O(\sqrt\eps)$-close to sharing $t$ EPR pairs and measuring
$\sigma_{W}$ on question $W$~\cite{NatarajanV16}. \\

We  describe now \pbt{}.
The test is divided in three different tests,
which are performed equal probability. The first one, the Consistency Test,
checks if the measurement performed by  both provers on question $W$ are
equivalent, i.e. $\tau^{A}_W \otimes I_B\ket{\psi}_{AB}$  is close to $I_A
\otimes \tau^{B}_W \ket{\psi}_{AB}$.  In the Linearity Test, the verifier checks
if the measurement performed by the provers are linear, i.e.
$\tau^{A}_{W(a)}\tau^{A}_{W(a')} \otimes I_B\ket{\psi}_{AB}$ is close to
$\tau^{A}_{W(a+a')} \otimes I_B\ket{\psi}_{AB}$. Finally, in the
Anti-commutation Test, the verifier  checks if the provers' measurements follow
commutation/anti-commutation rules consistent with the honest measurements, 
namely  $\tau^{A}_{W(a)}\tau^{A}_{W'(a')}
\otimes I_B\ket{\psi}_{AB}$ is close to $(-1)^{|\{W_i \ne W'_i \textrm{ and
} a_i = a'_i = 1\}|}\tau^{A}_{W'(a')}\tau^{A}_{W(a)} \otimes
I_B\ket{\psi}_{AB}$. 

The Consistency Test and Linearity Test are very simple and are described in
\Cref{fig:pbt}.  For the Anti-commutation Test, we can use non-local games
that allow the verifier to check that the
provers share a constant number of EPR pairs and perform Pauli measurements on them.  In
this work we use the Magic Square game since there is a perfect
quantum strategy for it.

\begin{figure}[H]
\rule[1ex]{16.5cm}{0.5pt}
\vspace{-20pt}
\justify
The verifier performs the following steps, with probability $\frac{1}{3}$ each:
  \begin{enumerate}[label=(\Alph*)]
  \item Consistency test
    \begin{enumerate}[label=\arabic*.]
    \item The verifier picks $W \inr \{X, Z\}^n$ and $a \in \{0,1\}^n$.
    \item The verifier sends $W(a)$ to both provers. 
      \item The verifier accepts iff the provers' answers are equal.
    \end{enumerate}
  \item Linearity test
    \begin{enumerate}[label=\arabic*.]
      \item The verifier picks $W \inr \{X, Z\}^t$ and $a, a' \inr \binary^t$.
      \item The verifier sends $(W(a),W(a'))$ to $P_1$ and $W' \inr
        \{W(a), W(a')\}$ to $P_2$.
      \item
        The verifier receives $b, b' \in \{\pm1\}^t$ from $P_1$ and 
        $c \in \{\pm1\}^t$ from $P_2$.
      \item 
        The verifier accepts iff $b = c$ when $W' = W(a)$ or $b' =
        c$ when $W' = W(a')$.
    \end{enumerate}
  \item Anti-commutation test
    \begin{enumerate}[label=\arabic*.]
    \item  The verifier makes the provers play Magic Square games in parallel with the
      $t$ EPR pairs (see \Cref{sec:magic-square}).
    \end{enumerate}
\end{enumerate}
\rule[2ex]{16.5cm}{0.5pt}\vspace{-.5cm}
  \caption{Pauli Braiding Test}\label{fig:pbt}
\end{figure}

\begin{theorem}[Theorem 14 of \cite{NatarajanV16}]\label{lem:rigidity}
  Suppose $\ket{\psi}\in\mH_\reg{A}\otimes \mH_\reg{B}$ and $W(a) \in
  \Obs(\mH_\reg{A})$, for $W\in \{X,Z\}^t$ and $a\in\{0,1\}^t$, specify a
  strategy for the players that has success probability at least $1-\eps$ in the
  Pauli Braiding Test. 
  Then there exist isometries $V_D:\mH_\reg{D} \to ((\complex^2)^{\otimes
  t})_{\reg{D'}}  \otimes \hat{\mH}_{\hat{\reg{D}}}$, for $D\in\{A,B\}$, such
  that
  $$\big\| (V_A \otimes V_B) \ket{\psi}_{\reg{AB}} -
  \ket{\EPR}_{\reg{A}'\reg{B}'}^{\otimes t}
  \ket{\aux}_{\hat{\reg{A}}\hat{\reg{B}}} \big\|^2 = O(\sqrt{\eps}),$$
  and on expectation over  $W\in \{X,Z\}^t$,
  \begin{align*}
     \Es{a\in\{0,1\}^t} \big\| \big(W(a) -V_A^\dagger (\sigma_W(a) \otimes
     \uI) V_A\big) \otimes \uI_B \ket{\psi} \big\|^2 &= O(\sqrt{\eps}).
  \end{align*}
  
  Moreover, if the provers share $\ket{\EPR}_{\reg{A}'\reg{B}'}^{\otimes
  t}$ and measure with the observables $\bigotimes \sigma_{W_i}$ on question $W$, they pass the test with probability $1$.
\end{theorem}

\subsection{Local Hamiltonian problem}

The Local Hamiltonian problem can be seen as the quantum analog of MAX-SAT
problem. An instance for this problem consists 
in
$m$
Hermitian matrices $H_1,\ldots,H_m$, where each $H_i$ acts non-trivially on at most 
most $k$ qubits.
For some parameters $\alpha, \beta \in \real$, $\alpha < \beta$, the Local
Hamiltonian problem asks if there is a global state such that its energy  in respect of $H = \frac{1}{m} \sum_{i \in[m]} H_i$ is
at most $\alpha$ or all states have energy at least $\beta$. This problem was
first proved to be \QMA{}-complete for $k = 5$ and $\beta - \alpha \geq
\frac{1}{\poly(n)}$~\cite{KitaevSV02}.
In this work, we are particularly interested in the version of LH where all the
terms are tensor products of $\sigma_X$, $\sigma_Z$ and $\sigma_I$.

\begin{definition}[XZ Local Hamiltonian]\label{def:xz-local-hamiltonian}
  The XZ $k$-Local Hamiltonian problem, for $k \in \Zplus$ and parameters $\alpha, \beta \in [0,1]$
  with $\alpha<\beta$, is the following promise problem. Let $n$ be the number of qubits of a quantum system.
   The input is a sequence of $m(n)$
 values $\gamma_1,...,\gamma_{m(n)}\in [-1,1]$ and $m(n)$
 Hamiltonians $H_1, \ldots, H_{m(n)}$ 
  where $m$ is a polynomial in $n$, and for each $i\in [m(n)]$,
 $H_i$ is of the form $\bigotimes_{j \in n} \sigma_{W_j} \in  \{\oX, \oZ, \oI\}^{\otimes n}$ with
 $|\{j |j \in [n] \textrm{ and } \sigma_{W_j} \ne \oI \}| \leq k$.
  For $H \eqdef \frac{1}{m(n)} \sum_{j = 1}^{m(n)} \gamma_j H_j$, one of the following two conditions hold.
\begin{description}
  \item[\quad Yes.]
{There exists a
      state $\ket{\psi} \in \complex^{2^{n}}$ such that
      $\bra{\psi} H \ket{\psi}
        \leq \alpha(n)$}
  \item[\quad No.]
{For all states $\ket{\psi} \in \complex^{2^{n}}$
      it holds that
      $\bra{\psi} H \ket{\psi}
        \geq \beta(n) .$
        }
\end{description} 
\end{definition} 
~\\
Whenever the value of $n$ is clear from the context, we call $\alpha(n)$, $\beta(n)$ and $m(n)$ by
$\alpha$, $\beta$ and $m$. The $XZ$  $k$-LH problem has been also proved
\QMA{}-complete~\cite{CubittM14,Ji16}.

\begin{lemma}[Lemma $22$ of \cite{Ji16}, Lemma $22$ of \cite{CubittM14}] \label{lem:xz-hamiltonian}
  There exist $\alpha,\beta \in [0,1]$ satisfying $\beta - \alpha \geq
  \frac{1}{poly(n)}$ such that $XZ$ $k$-Local Hamiltonian is \QMA{}-complete,
  for some constant $k$.
\end{lemma}

It is an open question if $k$-LH  is \QMA{}-complete for $\beta - \alpha =
O(1)$ while maintaining $k$  constant~\cite{AharonovAV13}. However,
it is possible to achieve this gap at the cost of increasing the locality of the Hamiltonian \cite{NatarajanV16}. 

\begin{lemma}[Lemma $26$ of \cite{NatarajanV16}]\label{lem:amplification}
Let $H$ be an $n$-qubit Hamiltonian with minimum energy $\lambda_{0}(H) \geq 0$
  and such that $\norm{H} \leq 1$. Let $\alpha, \beta \geq \frac{1}{poly(n)}$
  and $\alpha < \beta$ for all $n$. Let $H'$ be the following Hamiltonian on
  $(\beta
  - \alpha)^{-1}n$ qubits
\[
  H' = \oI ^{\otimes n a} - (\oI^{\otimes n} - (H - a^{-1}\oI^{\otimes n}))^{\otimes a}, \textrm{ where } a = \left(\beta - \alpha \right)^{-1}.
\]

It follows that if $\lambda_0(H) \leq \alpha$ then 
$\lambda_0(H') \leq \frac{1}{2}$, while if  $\lambda_0(H) \geq \beta$ then  $\lambda_0(H') \geq 1$.
Moreover if $H$ is a $XZ$ Hamiltonian, so is  $H'$.
\end{lemma}

Finally, we define now non-local games for Local Hamiltonian problems.

\defclass{Non-local games for Hamiltonians}{def:game-lh}{
A non-local game for the Local Hamiltonian problem consists in a reduction
from a Hamiltonian $H$ acting on $n$ qubits to a non-local game $G(H)$ where a
verifier plays against $r$ provers, and for some parameters $\alpha, \beta,
c,
s \in [0,1]$, for $\alpha < \beta$ and $c > s$, the following holds.
}{
  If $\lambda_0(H) \leq \alpha$, then $\omega^*(G(H)) \geq c$
}{
  If $\lambda_0(H) \geq \beta$, then $\omega^*(G(H)) \leq s$.
}

\section{One-round two-prover game for Local Hamiltonian}
\label{sec:game-lh}
In this section, we define our non-local game for Local Hamiltonian problem,
proving \Cref{thm:main}.
We start with a  $XZ$ Hamiltonian $H =\frac{1}{m} \sum_{l \in m} \gamma_l H_l$
acting on $n$ qubits
and  $\alpha, \beta \in [0,1]$ with $\alpha < \beta$.
We 
propose then the Hamiltonian Test $G(H)$, a non-local game based on $H$,  whose
maximum acceptance probability upper and lower bounds are tightly related to
$\lambda_0(H)$.
Based on $G(H)$, we show how to construct another non-local game
$\tilde{G}(H)$ for which there exists some universal constant $\Delta > 0$
such  that if $\lambda_0(H) \leq \alpha$, then 
$\omega^*(\tilde{G}(H)) \geq \frac{1}{2} + \Delta$, whereas if $\lambda_0(H)
\geq \beta$, then $\omega^*(\tilde{G}(H)) \leq\frac{1}{2} - \Delta$. The
techniques used to devise $G(H)$ and $\tilde{G}(H)$ are based on Ref. \cite{Ji16,NatarajanV16}.\\

We describe now the Hamiltonian Test $G(H)$, which is composed by the Pauli Braiding
Test (\pbt) (see \Cref{sec:pbt}) and the Energy Test (\energy{}), which allows
the verifier estimate $\lambda_0(H)$.  The provers are
expected to share $t$ EPR pairs and the first prover holds a copy of the
groundstate of $H$.  In \et{}, the verifier picks $l \inr [m]$, $W
\inr \{X, Z\}^t$ and $e \inr \{0,1\}^t$, and chooses $\T_1,...,\T_n \in [t]$
such that $W(e)_{\T_i}$ matches the $i$-th Pauli observable of $H_l$.  By
setting $t = O(n\log{n})$, it is possible to choose such positions for a random
$W(e)$ with overwhelming probability. The verifier sends $\T_1,...,\T_n$ to the first prover, who
is supposed to teleport the groundstate of $H$ through the EPR pairs in these positions. As in \pbt{}, the
verifier sends $W(e)$ to the second prover, who is supposed to measure his EPR
halves with the corresponding observables.  The values of $\T_1,...,\T_n$ were chosen
in a way that the first prover teleports the groundstate of $H$ in the exact positions
of the measurement according to $H_l$.
With the outcomes of the teleportation measurements, the verifier can correct
the output of the measurement of the second prover and estimate $\lambda_0(H)$.
The full description of the game is presented in \cref{fig:game}.

\begin{figure}[t]
\rule[1ex]{16.5cm}{0.5pt}
\vspace{-20pt}
\justify
The verifier performs each of the following steps with probability $1 - p$  and $p$, respectively:
  \begin{enumerate}[label=(\Alph*)]
  \item Pauli Braiding Test
  \item Energy Test
    \begin{enumerate}[label=\arabic*.]
\item The verifier picks  $W \inr \{X, Z\}^t$, $e \inr \binary^t$ and  $l \inr [m]$
\item The verifier picks positions $\T_1$, ...$\T_n$ such that $H_l = \bigotimes \sigma_{W(e)_{\T_i}}$.
\item The verifier sends $\T_1, ..., \T_n$ to the first prover and $W(e)$ to the second prover.
\item The first prover answers with $a,b \in \{0,1\}^n$ and the second prover
  with  $c \in \{+1,-1\}^t$.
\item 
Let $d \in \{-1,+1\}^n$ such that $d_i = (-1)^{a_i} c_{\T_i}$ if $W_{\T_i} = X$ and $d_i = (-1)^{b_i} c_{\T_i}$ if $W_{\T_i} = Z$.
\item If $\prod_{i \in [n]} d_i  \ne \fn{sign}{\gamma_l}$, the verifier accepts.
\item Otherwise, the verifier rejects with probability $|\gamma_l|$.
\end{enumerate}
\end{enumerate}
\rule[2ex]{16.5cm}{0.5pt}\vspace{-.5cm}
\caption{Hamiltonian Test $G(H)$ for a $XZ$ Hamiltonian $H$.}\label{fig:game}
\end{figure}

We state now two auxiliary lemmas with lower and upper bounds on the maximum acceptance probability on
$G(H)$.

\begin{lemma}\label{lem:honest-strategy}
Let $H = \sum_{l \in [m]} \gamma_l H_l$ be a $XZ$ Hamiltonian, let $G(H)$ be
  the Hamiltonian-self test for $H$,  described in \Cref{fig:game}, and 
\[\omega_h(H) \eqdef 1 - p\left(\frac{1}{2m} \sum_{l \in [m]}|\gamma_l| -
\frac{1}{2}\lambda_0(H)\right). \]
  If the
  provers use the honest strategy in \pbt{}, the maximum acceptance
  probability in $G(H)$ is  $\omega_h(H)$. Moreover, this
  probability is achieved if the first prover behaves honestly in \et{}.
\end{lemma}
\begin{lemma}\label{lem:soundness-game}
  Let $H$, $G(H)$ and $\omega_h(H)$ be defined as \Cref{lem:honest-strategy}.
For every $\eta > 0$, there is some value of $p = O(\sqrt \eta)$ such that 
  $\omega^*(G(H)) \leq \omega_h(H)
  + \eta$.
\end{lemma}

 We defer the proof of these lemmas
 to \Cref{sec:proof} and we concentrate now in
 proving our main theorem.

\begin{theorem}\label{thm:main}
There exists a universal constant $\Delta$ such that the following holds.
  Let $H = \sum_{l \in m} \gamma_l H_l$ be $XZ$ $k$-Local Hamiltonian
  acting on  $n$ qubits with parameters   $\alpha, \beta \in (0,1)$, for  $\beta
  > \alpha$. There exists one-round two-prover non-local game such that
\begin{itemize}
  \item if $\lambda_0(H) \leq \alpha$, then the verifier accepts with probability at least $\frac{1}{2} + \Delta$; and
  \item if  $\lambda_0(H) \geq \beta$,
    then the verifier accepts with probability at most $\frac{1}{2} - \Delta$.
\end{itemize}
  Moreover, each message is $\tilde{O}(n(\beta - \alpha)^{-1})$-bit long.
\end{theorem}
\begin{proof}
\Cref{lem:amplification} states that from $H$ we can construct a Hamiltonian
  $H'$ such  that
\[\lambda_0(H)\leq \alpha \Rightarrow \lambda_0(H')\leq \frac{1}{2} \textrm{ and } \lambda_0(H)\geq \beta \Rightarrow \lambda_0(H') \geq 1,\]
  and $H' = \sum_{l \in m'} \gamma'_l H'_l$ is an instance of $XZ$ Local
  Hamiltonian problem.

We now bound the maximum acceptance probability of the  Hamiltonian Test on
  $H'$, relating it to the groundstate energy of $H$.
  From \Cref{lem:honest-strategy} 
  it follows that 
  \[\lambda_0(H)\leq \alpha \Rightarrow \omega^*(G(H')) \geq 
  1 - p\left(\frac{1}{2m} \sum_{l \in [m]}|\gamma'_l| - \frac{1}{4}\right) \eqdef c,\]
while from \Cref{lem:soundness-game}, for any $\eta > 0$ and some $p \leq C \sqrt{\eta}$, we have that
  \[
\lambda_0(H)\geq \beta \Rightarrow
\omega^*(G(H')) \leq 1 - p\left(\frac{1}{2m} \sum_{l \in [m]}|\gamma'_l| -
  \frac{1}{2}\right) + \eta= 
 c - \frac{C \sqrt\eta}{4} + \eta.\]

By choosing  $\eta$ to be a constant such that  
$\eta' \eqdef \frac{C \sqrt\eta}{4} - \eta > 0$, it follows that
  \[\lambda_0(H)\leq \alpha \Rightarrow \omega^*(G(H')) \geq 
  c \text{ and } 
  \lambda_0(H)\geq \beta \Rightarrow \omega^*(G(H')) \leq 
  c - \eta'.  \]
  
We describe now the game $\tilde{G}(H)$ that achieves the completeness and
  soundness properties  stated in the theorem. In this game, the verifier
  accepts with probability $\frac{1}{2} - \frac{2c - \eta'}{4}$, rejects with
  probability $\frac{2c - \eta'}{4}$ or play $G(H')$ with probability
  $\frac{1}{2}$.  Within this new game, if $\lambda_0(H)\leq \alpha$ then
  $\omega^*(\tilde{G}(H')) \geq \frac{1}{2} + \frac{\eta'}{4}$, whereas when
  $\lambda_0(H)\geq \beta$, we have that $\omega^*(\tilde{G}(H')) \leq \frac{1}{2} -
  \frac{\eta'}{4}$.
\end{proof}

\begin{corollary}
There  exists  a  protocol for  verifiable  delegation  of  quantum  computation
where a classical client communicates with two entangled servers in one round of
classical communication.
\end{corollary}
\begin{proof}
The corollary holds from composing the circuit-to-hamiltonian construction (see \Cref{sec:circ-to-ham}) with our non-local game.
\end{proof}

\begin{remark}
The parameters of our delegation protocol allow us to use standard arguments in relativistic cryptography to replace the assumption that the provers do not communicate by the assumption that they can only communicate at most as fast as the speed of light. See \Cref{sec:applications} for more details on this.
\end{remark}

\subsection{Proof of \Cref{lem:honest-strategy,lem:soundness-game}}
\label{sec:proof}
We start by proving \Cref{lem:honest-strategy}, showing an upper bound on the
acceptance probability if the provers are honest in \pbt{}.
\begin{proof}[Proof of \Cref{lem:honest-strategy}]
  Since \pbt{} and \et{} are indistinguishable to the second prover, he also
  follows the honest strategy in \et{} and 
  the acceptance probability in $G(H)$ depends uniquely in the strategy of the
  first prover in \energy{}.
  
  Let $a, b \in \{0,1\}^n$ be the answers of the first prover in $\et$ and
$\tau$ be the reduced state held by the second prover on the positions
  $\T_1,...,\T_n$ of his EPR halves, after the teleportation.

For a fixed $H_l$, the verifier rejects with
  probability
  \begin{align}
    \label{eq:rej-prob}
    \frac{|\gamma_l| + \gamma_l \mathbb{E}\left[\prod_{i \in n} d_i\right] }{2}.
  \end{align}
  We notice that measuring a qubit $\ket{\phi}$ in the $Z$-basis with outcome
  $f \in \{\pm 1\}$ is equivalent of considering the outcome $(-1)^g f$ when
  measuring $X^gZ^h\ket{\phi}$ in the same basis. An
  analog argument follows also for the $X$-basis. Therefore, by fixing the
  answers of the first prover,  instead of considering that
  the second prover measured  $\tau$ in respect of $H_l$  with outcome $c$, we
  consider that  he measured $\rho = \uZ^b \uX^a \tau \uX^a \uZ^b$ with respect
  to 
  $H_l$ with outcome $d$. In this case, 
  by taking $\prod_{i \in n} d_i$ as the outcome of the measurement of $H_l$ on
  $\rho$, and
  averaging over all $l \in [m]$, it follows from \cref{eq:rej-prob} that the
  verifier rejects in \energy{} with probability
\begin{equation*}\label{eq:average}
\frac{1}{m} \sum_{l \in [m]} \frac{|\gamma_l| + \gamma_l\trace{\rho H_l}}{2} = 
\frac{1}{2m}\sum_{l \in [m]}|\gamma_l| + \frac{1}{2} \trace{\rho H},
\end{equation*}
and this value is minimized when $\rho$ is the groundstate of $H$. In this case
  the overall acceptance probability in $G(H)$ is at most
\begin{equation*}\label{eq:maximum-honest-probability}
  1 -p\left(\frac{1}{2m} \sum_{l \in [m]}|\gamma_l| -
  \frac{1}{2}\lambda_0(H)\right) = \omega_h(H). 
\end{equation*}

Finally, this acceptance probability is achieved  if the first prover teleports
  the groundstate $\ket{\psi}$ of $H$ and report the honest outcomes
  from the teleportation, since $\tau = \uX^a\uZ^b\kb{\psi}\uZ^b\uX^a$ and $\rho
  = \kb{\psi}$.
\end{proof}

We use now the
self-testing of \pbt{} to certify the measurements of the second prover in
\et{}. In this way, we can bound the acceptance probability in $G(H)$ with
\Cref{lem:honest-strategy} and prove \Cref{lem:soundness-game}.

\begin{proof}[Proof of \Cref{lem:soundness-game}]
Let $S$ be the strategy of the provers, which results in
  acceptance
  probabilities 
$1 - \eps$ in $\pbt{}$ and $1- \frac{1}{2m} \sum_{l \in [m]}|\gamma_l| -
  \frac{1}{2}\lambda_0(H) + \delta$ in $\et{}$,
  for some $\eps$ and $\delta$. 

By Lemma~\ref{lem:rigidity}, their strategy in \pbt{} is
  $O(\sqrt\eps)$-close to the honest strategy, up to the local isometries $V_A$ and
  $V_B$. Let $S_h$ be the strategy where the provers follow 
  the honest strategy in \pbt{} and, for \energy{},
  the first prover  performs the same operations of $S$, but considering the
  isometry $V_A$ from \Cref{lem:rigidity}.
Since the measurements performed by the provers in $S$ and $S_h$ are
  $O(\sqrt\eps)$-close to each other, considering the isometries, the
  distributions of the corresponding transcripts have statistical distance
  at most $O(\sqrt\eps)$. Therefore, the provers following strategy $S_h$ are accepted in \energy{} with probability at least \[1- \frac{1}{2m} \sum_{l \in [m]}|\gamma_l| - \frac{1}{2}\lambda_0(H) + \delta - O(\sqrt\eps).\]

  Since in $S_h$ the provers perform the honest strategy in \pbt{}, it follows
  from \Cref{lem:honest-strategy} that
  \[1- \frac{1}{2m} \sum_{l \in [m]}|\gamma_l| - \frac{1}{2}\lambda_0(H) +
  \delta - O(\sqrt\eps) \leq 1 - \frac{1}{2m} \sum_{l \in [m]}|\gamma_l| -
  \frac{1}{2}\lambda_0(H),\]
  which implies that $\delta \leq C \sqrt{\eps}$, for some constant $C$.

The original strategy $S$ leads  to acceptance probability at
  most
\begin{align*}
(1-p)(1-\eps) + p\left(1-\frac{1}{2m} \sum_{l \in [m]}|\gamma_l| - \frac{\lambda_0(H)}{2} +  C \sqrt{\eps}\right)
  = \omega_h(H)  -(1-p)\eps +   p C \sqrt{\eps}.
\end{align*}
  For any $\eta$, we can pick $p = \min\left\{\frac{\sqrt{\eta}}{D}, 1\right\}$,
  for $D \geq 2C$, and it follows that
\begin{align*}
  p C\sqrt\eps- (1-p)\eps
  \leq \frac{2C\sqrt\eta \sqrt\eps}{D} - \eps
  \leq \sqrt\eta \sqrt\eps - \eps
  \leq \eta
\end{align*}
  and therefore the maximum acceptance probability is at most $\omega_h(H) +
  \eta$.
\end{proof}

\clearpage

\bibliography{delegation}

\appendix

\section{Circuit-to-hamiltonian construction}
\label{sec:circ-to-ham}
Feynman~\cite{Feynman1986}, in his pioneering work where he suggests the use of
the quantum structure
of matter as a computational resource, has  shown the construction of a time-independent
Hamiltonian that is able to simulate the evolution of a quantum circuit. 
This construction is now known as the circuit-to-hamiltonian construction and
it is a central point in proving \QMA{}-completeness of Local Hamiltonian
problems \cite{KitaevSV02,BreuckmannT14} and the universality of adiabatic
quantum computation \cite{AharonovDKLLR04}. More related to this work, it has
also been used in the context of delegation of quantum
computation~\cite{FitzsimonsH15,NatarajanV16}. We describe now such
construction.

Let $Q = U_T...U_1$ be a quantum circuit on $n$ qubits, decomposed on $T$
$2$-qubit gates. Let us assume for simplicity that the circuit $Q$ is applied
on the initial state $\ket{\psi}$. The Hamiltonian $H_Q$ acts 
on $n$ working qubits, as the circuit $Q$, and an extra
clock-register of $c$ qubits to count the operations steps from $0$ to $T$. The
number of bits in the clock register depends on the representation of 
the time steps:
if we represent it in binary, we take $c = \log{T}$; for some applications, it
is better to represent it in unary, where time $t$ will be encoded as $T-t$ ``$0$''s followed by $t$
``$1$''s. For the remainder of the section we abstract the representation of the
clock register and we write
$\ket{t}_{clock}$ for the correct encoding of time $t$.
We will construct $H_Q$ such that its groundstate is
\begin{align}
  \label{eq:history}
  \sum_{t = 0}^{T} \ket{t}_{clock} \otimes U_t...U_1\ket{\psi},
\end{align}
which is known as the history state of $Q$.
As noticed by Fitzsimons and Hajdu\v{s}ek~\cite{FitzsimonsH15}, 
the history state of $H_Q$ can be computed
in quantum polynomial time if the initial state $\ket{\psi}$ is provided. 
\begin{lemma}
  \label{lem:efficient-history}
  Provided the initial state $\ket{\psi}$ of $Q$, the history state
  $\sum_{t = 0}^{T} \ket{t}_{clock} \otimes U_t...U_1\ket{\psi}$
  can be prepared in time polynomial in $T$.
\end{lemma}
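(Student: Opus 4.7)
The plan is to explicitly construct a quantum circuit of size polynomial in $T$ that, given $\ket{\psi}$ as input on the work register, outputs the (normalized) history state $\frac{1}{\sqrt{T+1}}\sum_{t=0}^{T}\ket{t}_{clock}\otimes U_t\cdots U_1\ket{\psi}$. First I would prepare the clock register in the uniform superposition $\frac{1}{\sqrt{T+1}}\sum_{t=0}^{T}\ket{t}_{clock}$ over valid clock encodings. In the unary representation this is a $T$-qubit state that can be produced by a polynomial-size circuit (for instance, via a simple sequence of controlled rotations that successively set bit $t+1$ conditional on bit $t$ being $1$, with the correct amplitudes); in the binary representation one can use $O(\log T)$-qubit state preparation, which is also polynomial.

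Next, I would apply a sequence of $T$ controlled unitaries to the joint register. Concretely, for $i=1,\ldots,T$, I apply $U_i$ to the work register controlled on the predicate ``$\text{clock}\geq i$''. This predicate is cheap to evaluate in either clock encoding: in unary it is simply the $i$-th clock bit being $1$ (so the controlled-$U_i$ is a standard $2$-qubit gate controlled on one clock qubit), and in binary it is an $O(\log T)$-size comparison circuit whose result can be written to an ancilla, used as a single control, and uncomputed. Since each $U_i$ is a $2$-qubit gate, the entire stage has size polynomial in $T$.

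To see correctness, I would track the action on each clock branch: on the branch $\ket{t}_{clock}\otimes\ket{\psi}$, exactly the controls for $i=1,\ldots,t$ fire (in that order), producing $\ket{t}_{clock}\otimes U_t\cdots U_1\ket{\psi}$; summing over $t$ gives the desired history state. The whole procedure is polynomial in $T$ (and in the size of the $U_i$, which are constant), and it uses $\ket{\psi}$ only as an input, as required.

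The only mildly delicate point is showing that the uniform superposition over clock states and the ``$\text{clock}\geq i$'' controls can be implemented exactly (or to within exponentially small error) in $\poly(T)$ time for whichever clock encoding is used downstream. I do not expect any real obstacle here: in unary the construction is completely elementary, and in binary it reduces to standard state preparation and comparator circuits, both of which are well known to admit polynomial-size quantum implementations.
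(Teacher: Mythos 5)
Your construction is correct, and it is the standard one: prepare the uniform superposition $\frac{1}{\sqrt{T+1}}\sum_{t=0}^{T}\ket{t}_{clock}$ over (unary or binary) clock states, then apply $U_i$ controlled on ``clock $\geq i$'' for $i=1,\dots,T$, so that on branch $t$ exactly the first $t$ controls fire in order and produce $U_t\cdots U_1\ket{\psi}$. Note that the paper itself gives no proof of this lemma; it simply cites Fitzsimons and Hajdu\v{s}ek, so your argument supplies the missing details rather than diverging from a proof in the text. Two minor remarks: the lemma as stated omits the normalization $\frac{1}{\sqrt{T+1}}$, which you correctly restore; and since the construction used downstream is the unary clock, your observation that the control is then a single clock qubit means the whole circuit is exact and elementary, with no need for the binary-encoding comparator discussion (which is fine but not required here).
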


The Hamiltonian $H_Q$ is decomposed in three parts:
the initialization terms, 
the propagation terms and clock terms. As we see later, output
terms are also needed for some applications.
  
The initialization terms check if the groundstate is a computation that start in a valid
initial state $\ket{\psi}$. For instance, if $\ket{\psi} = \ket{0}^{\otimes n}$, then for
each $i \in [n]$, the following term will be added to $H_Q$
\[\kb{0}_{clock} \otimes \kb{1}_i.\]
The interpretation of these terms is that they add some ``penalty'' for states
where the computation does not start with a $\ket{0}^{\otimes n}$.

The propagation terms check if  all the intermediate
steps $U_0$,...,$U_T$ are simulated in the Hamiltonian. For
each step $t \in [T]$, the following Hamiltonian is added to $H_Q$ 
\begin{align*}
    \frac{1}{2} \Big(&-\ket{t}\bra{t-1}_{clock} \otimes U_t
 - \ket{t}\bra{t-1}_{clock} \otimes U_t^\dagger  \\
 &+  \kb{t}_{clock} \otimes I  + 
\kb{t-1}_{clock}  \otimes I \Big),
\end{align*}
where the second part of the tensor product acts on the same qubits of  $U_t$.

The clock terms are added in order to check if the clock register
contains only correct encodings of time.
For instance, if time is encoded in unary, 
the clock terms  check if there is no $1$ followed by a $0$ in
the clock register. More concretely, in the unary representation, 
for every $i \in [T]$, the following
term is added to $H_Q$
\[\kb{10}_{i, i+1},\]
and it acts on qubits $i$ and $i +1$ of the clock register

One can easily see by inspection that the state in \cref{eq:history} is the only
state that has energy $0$ according the previous terms.

In some applications, we need also to check some properties of the output of
$Q$. For instance in delegation protocols, we are interested in the probability that $Q$
outputs $\ket{1}$ (we usually say in this case that $Q$ accepts).
In these cases, we want to construct $H_Q$ such that its frustration is related
to the acceptance probability of the circuit: if $Q$ outputs $\ket{1}$ with
probability at least $c$, then $\lambda_0(H_Q) \leq \alpha$,
while if $Q$ outputs $\ket{1}$ with
probability at most $s$, then $\lambda_0(H_Q) \geq \beta$.
For this task, we add the following term to the $H_Q$ that acts on the
clock register and on the output qubit
\[\kb{T}_{clock} \otimes \kb{0}_{output}.\]

The following theorem was then proved by Kitaev~\cite{KitaevSV02}.
\begin{theorem}[Sections $14.4.3$ and $14.4.4$ of \cite{KitaevSV02}]
  \label{thm:kitaev}
  Let $Q$ be a quantum circuit composed by $T$ gates that computes on some
  initial state $\ket{\phi}$ and then decides to accept or
  reject. Let $H_Q$ be the $5$-Local Hamiltonian created with the circuit-to-hamiltonian
  with unary clock on $Q$.
  \begin{description}
    \item[Completeness.] If the acceptance probability is at least $1-\eps$, then
      $\lambda_0(H_Q) \leq \frac{\eps}{T+1}$.
    \item[Soundness.] If the acceptance probability is at most $\eps$, then
      $\lambda_0(H_Q) \geq c\frac{1-\sqrt\eps}{T^3}$, for some constant $c$.
  \end{description}
\end{theorem}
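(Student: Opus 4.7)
For an initial state $\ket{\phi}$ on which $Q$ accepts with probability at least $1-\eps$, I would evaluate $H_Q$ on the normalized history state
\[
\ket{\eta} \;=\; \frac{1}{\sqrt{T+1}} \sum_{t=0}^{T} \ket{t}_{clock} \otimes U_t\cdots U_1 \ket{\phi}.
\]
A direct term-by-term check shows that $H_{\text{in}}$, $H_{\text{prop}}$, and $H_{\text{clock}}$ all annihilate $\ket{\eta}$: the initialization terms vanish because the $t=0$ slice equals $\ket{\phi}$; the clock terms vanish because the unary encodings contain no ``10'' substring; and the propagation term at step $t$ is by design the projector onto the orthogonal complement of $\mathrm{span}\{\ket{t-1}\otimes\ket{\psi_{t-1}} + \ket{t}\otimes U_t\ket{\psi_{t-1}}\}$, which contains the relevant two-slice fragment of $\ket{\eta}$. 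Only the output penalty $\kb{T}_{clock}\otimes\kb{0}_{output}$ contributes, giving $\bra{\eta}H_Q\ket{\eta} = \frac{1}{T+1}\Pr[Q \text{ rejects}] \leq \frac{\eps}{T+1}$, so $\lambda_0(H_Q) \leq \eps/(T+1)$.

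\textbf{Soundness.} I would follow Kitaev's original strategy, based on the geometric lemma: for positive semidefinite operators $A,B$ whose null spaces $\mathcal N_A,\mathcal N_B$ meet only at $0$, one has $\lambda_0(A+B) \geq 2\min(\gamma_A,\gamma_B)\sin^2(\theta/2)$, where $\gamma_A,\gamma_B$ are the smallest nonzero eigenvalues and $\theta$ is the angle between the two null spaces. The first step is to apply this lemma with $A = H_{\text{in}}+H_{\text{prop}}+H_{\text{out}}$ and $B = H_{\text{clock}}$: since $H_{\text{clock}}$ has a constant spectral gap above its legal-unary null space, this costs only a constant factor and lets me work within the legal clock subspace.

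Within that subspace, I would conjugate by the unitary $W = \sum_{t=0}^T \ket{t}\bra{t}_{clock}\otimes U_t\cdots U_1$, under which the propagation Hamiltonian becomes $\frac{1}{2}L\otimes I$, where $L$ is the Laplacian of the path on $\{0,\ldots,T\}$. The spectrum of $L$ is explicit: the ground space on the clock register is one-dimensional (the uniform superposition over all $T+1$ time steps), and the gap is $\Theta(1/T^2)$. A second application of the geometric lemma reduces the problem to the smallest eigenvalue of the projection of the (conjugated) $H_{\text{in}}+H_{\text{out}}$ onto this propagation ground space. Since $H_{\text{in}}$ and $H_{\text{out}}$ are supported only at the clock endpoints $t=0$ and $t=T$ respectively, this projection equals $\frac{1}{T+1}(\Pi_{\text{in}} + V_{\text{out}})$ on the work register, where $\Pi_{\text{in}}$ projects onto invalid initial states and $V_{\text{out}} = (U_T\cdots U_1)^\dagger \kb{0}_{output}(U_T\cdots U_1)$ projects onto states that $Q$ rejects.

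A final angle computation closes the argument: if every valid initial state is accepted by $Q$ with probability at most $\eps$, then for any $\ket{\psi}\in\ker\Pi_{\text{in}}$ and $\ket{\phi}\in\ker V_{\text{out}}$ one has $|\braket{\psi|\phi}|^2 \leq \eps$, so the angle $\theta$ between the null spaces satisfies $\cos\theta \leq \sqrt\eps$ and therefore $\sin^2(\theta/2) \geq (1-\sqrt\eps)/2$. Combining this angle bound with the propagation gap $\Theta(1/T^2)$, the $1/(T+1)$ projection weight, and the constant factors from the two applications of the geometric lemma yields $\lambda_0(H_Q) \geq c(1-\sqrt\eps)/T^3$. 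The main obstacle is carrying the $1-\sqrt\eps$ factor cleanly through the nested applications of the geometric lemma without losing powers of $T$: the $1/T^3$ scaling emerges as the product of the $1/T^2$ propagation gap and the $1/T$ output-projection weight, while the $\sqrt\eps$ dependence is entirely due to the angle between the two codimension-$1$ subspaces ``valid initial state'' and ``accepts with certainty''.
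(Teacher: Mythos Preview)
The paper does not prove this theorem; it is stated as a cited result from Kitaev's book (Sections 14.4.3 and 14.4.4 of \cite{KitaevSV02}) and no proof is given in the text. Your sketch is precisely Kitaev's original argument---the history state for completeness, and for soundness the geometric lemma applied twice (first to peel off $H_{\text{clock}}$, then to $H_{\text{prop}}$ versus $H_{\text{in}}+H_{\text{out}}$ after the change of basis $W$), with the $1/T^2$ propagation gap, the $1/(T+1)$ endpoint weight, and the $\sqrt{\eps}$ angle bound combining to give $c(1-\sqrt{\eps})/T^3$. So your proposal is correct and, insofar as there is a ``paper's proof'' to compare to, it is the same one the paper is citing.
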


Ji~\cite{Ji16} has proved that Kitaev's construction can be converted into an XZ
Local Hamiltonian, by choosing a suitable gate-set for the circuit $Q$.

\begin{theorem}[Lemma $22$ of \cite{Ji16}]
  Let $Q$ be a quantum circuit composed of gates in the following universal
  gate-set $\{CNOT, X, \cos(\frac{\pi}{8})X + \sin(\frac{\pi}{8})Z\}$. 
  Then $H_Q$ from~\cref{thm:kitaev} can be written as a $XZ$ $5$-Local
  Hamiltonian.
\end{theorem}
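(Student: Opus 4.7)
The plan is to inspect each type of term in the circuit-to-hamiltonian construction from \Cref{thm:kitaev} and to verify that every term can be written as a real linear combination of tensor products of $\{\sigma_I,\sigma_X,\sigma_Z\}$ (no $\sigma_Y$), while staying within the 5-local budget. My first step would be to expand the gates themselves in the XZ basis: $X$ is already $\sigma_X$; the single-qubit rotation $\cos(\pi/8)X+\sin(\pi/8)Z$ is a real combination of $\sigma_X$ and $\sigma_Z$ by definition; and $\mathit{CNOT}$ admits the expansion
\[
\mathit{CNOT}=\tfrac{I+Z}{2}\otimes I+\tfrac{I-Z}{2}\otimes X=\tfrac{1}{2}\bigl(II+IX+ZI-ZX\bigr),
\]
which is XZ. Crucially, all three gates are Hermitian, a fact I will use below.

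Next I would handle the initialization, clock, and output terms. Each is a tensor product of the one-qubit projectors $\kb{0}=\tfrac{1}{2}(I+Z)$ and $\kb{1}=\tfrac{1}{2}(I-Z)$, so each expands directly into an XZ sum involving only $I$ and $Z$; the locality is at most $2$ (for clock terms like $\kb{10}$) or $1+c$ for initialization/output terms, and with the unary clock the position is bounded to the relevant window.

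The delicate step is the propagation term
\[
H_{\mathrm{prop},t}=\tfrac{1}{2}\bigl(-\ket{t}\bra{t-1}_{\mathit{clock}}\otimes U_t-\ket{t-1}\bra{t}_{\mathit{clock}}\otimes U_t^\dagger+\kb{t}_{\mathit{clock}}\otimes I+\kb{t-1}_{\mathit{clock}}\otimes I\bigr).
\]
In the unary encoding, the transition from time $t-1$ to time $t$ only flips the bit at position $t$ (with positions $t-1$ and $t+1$ serving as local witnesses of validity), so $\ket{t}\bra{t-1}_{\mathit{clock}}$ restricts to the three-qubit operator $\kb{1}_{t-1}\otimes\ket{1}\!\bra{0}_{t}\otimes\kb{0}_{t+1}$. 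Here $\ket{1}\!\bra{0}=\tfrac{1}{2}(X+iY)$ contains an imaginary $Y$ component, which is the sole obstruction to an XZ decomposition. However, since $U_t=U_t^\dagger$ by Hermiticity, the two cross terms combine to
\[
-(\ket{t}\bra{t-1}+\ket{t-1}\bra{t})_{\mathit{clock}}\otimes U_t=-\tfrac{I-Z}{2}\otimes X\otimes\tfrac{I+Z}{2}\otimes U_t,
\]
where the $iY$ parts cancel exactly. The diagonal clock contribution collapses to $\tfrac{I-Z}{2}\otimes I\otimes\tfrac{I+Z}{2}$ on the same three qubits. Tensoring with the XZ expansion of $U_t$ keeps everything in the XZ basis.

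Finally I would count locality: propagation terms act on at most $3$ clock qubits plus the at most $2$ data qubits of $U_t$, giving $5$-locality; all other terms are at most $2$-local. The principal obstacle is the $Y$ cancellation in the propagation term, and the argument relies essentially on every gate in the chosen universal set being Hermitian and already expressible as a real combination of $\{I,X,Z\}$-tensors; any gate outside this closed Hermitian XZ class (such as a $T$ gate or Hadamard composed with a non-self-adjoint step) would reintroduce $Y$ and break the construction.
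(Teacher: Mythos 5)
Your argument is correct, and it reconstructs what is essentially the standard (Ji-style) argument; note that the paper itself does not prove this statement but imports it as Lemma 22 of \cite{Ji16}, so your write-up supplies the missing details rather than diverging from a proof given here. The two essential observations are exactly the right ones: (i) every gate in the set $\{CNOT,\ X,\ \cos(\tfrac{\pi}{8})X+\sin(\tfrac{\pi}{8})Z\}$ is Hermitian and expands with real coefficients over tensor products of $\sigma_I,\sigma_X,\sigma_Z$ (your explicit CNOT expansion $\tfrac12(II+IX+ZI-ZX)$ is correct), and (ii) Hermiticity of $U_t$ lets the two off-diagonal propagation terms symmetrize so that the clock hopping operator becomes $\kb{1}_{t-1}\otimes\sigma_X{}_{,t}\otimes\kb{0}_{t+1}$, eliminating the $\sigma_Y$ component; all remaining terms are diagonal projectors built from $\tfrac12(\sigma_I\pm\sigma_Z)$. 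Two small repairs: first, $\ket{1}\!\bra{0}=\tfrac12(\sigma_X-i\sigma_Y)$, not $\tfrac12(\sigma_X+i\sigma_Y)$ (harmless, since only the cancellation matters); second, your locality count for the initialization and output terms should not be ``$1+c$'' --- in Kitaev's unary construction these terms check a \emph{single} clock qubit (the first, respectively the last, since the clock terms already enforce valid unary strings), e.g.\ $\kb{1}_i\otimes\kb{0}_{clock,1}$ and $\kb{T}$ replaced by $\kb{1}$ on the last clock qubit, so they are $2$-local; as literally stated, a $(1+c)$-local term would violate the claimed $5$-locality, which is otherwise set by the propagation terms (three clock qubits plus the at most two data qubits of $U_t$). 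With those adjustments the proof is complete, and your closing remark correctly identifies why the gate set must be both Hermitian and $\sigma_Y$-free.
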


\section{Relativistic cryptography}
\label{sec:applications}

In this section we present the main arguments to show that in our protocol, the
assumption that the provers do not communicate can be replaced by the assumption
that they cannot communicate faster than speed of light. Such protocols are
called relativistic protocols in the literature.

The first relativistic protocol is due to Kent~\cite{Kent99}, who showed the
existence of relativistic information-theoretical secure bit commitment,
which is impossible in the general case~\cite{Mayers97,LoC97}.  Since
then, several other relativistic protocols were proposed for bit
commitment \cite{Kent11,Kent12,LunghiKBHTKGWZ13,LunghiLBJTWZ15,ChakrabortyCL15}, verification of space-time position of
agents~\cite{ChandranGMO2009,KentMS11,LauL11,BuhrmanCFGGOS11,Unruh2014,ChakrabortyL15}, oblivious transfer \cite{Pitalua16} and zero-knowledge proof
systems \cite{ChaillouxL17}.

Relativistic protocols can be achieved by fixing the position of the provers and the verifier
in
such a way that the information transmitted between the provers takes
much longer than an upper-bound of the duration of the honest protocol.  The
verifier could  abort whenever the answers arrive too late,
since the provers could have communicated and the security of the protocol is
compromised.  The space-time diagram of such interactions is depicted in
\Cref{fig:spacetime}.
We show also how to prevent more sophisticated attacks
where malicious provers move closer to the verifier in order to receive the
message earlier, being able to cheat in the previous protocol.

\begin{figure}
\begin{tikzpicture} \pgfmathsetmacro{\pone}{0.05}
  \pgfmathsetmacro{\ptwo}{0.55}
  \pgfmathsetmacro{\ver}{0.3}
  \pgfmathsetmacro{\tzero}{0.25}
  \pgfmathsetmacro{\tone}{0.4}
  \pgfmathsetmacro{\ttwo}{0.65}
  \pgfmathsetmacro{\tthree}{0.75}
  
  \begin{axis}[ymin=-0.1,ymax=0.8,xmax=0.6,xmin=-0.2,xticklabel=\empty,yticklabel=\empty,
               axis lines = middle,xlabel=$x$,ylabel=$t$,label style =
               {at={(ticklabel cs:1.1)}},
               xtick={\pone, \ptwo, \ver}, ytick={\tzero, \tone, \ttwo, \tthree}]
    \addplot[thick,color=blue,opacity=0,mark=o,fill=green, 
                    fill opacity=0.2]coordinates  {
            (0, 0) 
            (0, 0.75)
            (1, 0.75)
            (1, 0)  };
\addplot [dotted] coordinates {(\pone, 0) (\pone, 1)};
\addplot [dotted] coordinates {(\ptwo, 0) (\ptwo, 1)};
\addplot [dashed] coordinates {(\ver, 0) (\ver, 1)};
\addplot [thick] coordinates {(\ver, 0) (\ptwo, \tzero)};
\addplot [thick] coordinates {(\ver, 0) (\pone, \tzero)};
\addplot [red, thick] coordinates {(\pone, \tzero) (\ptwo, \tthree)};
%\addplot [red, thick] coordinates {(0.5, 0.2) (0.1, 0.6)};
\addplot [thick] coordinates {(\pone, \tzero) (\pone, \tone)};
\addplot [thick] coordinates {(\ptwo, \tzero) (\ptwo, \tone)};
\addplot [thick] coordinates {(\pone, \tone) (\ver, \ttwo)};
\addplot [thick] coordinates {(\ptwo, \tone) (\ver, \ttwo)};

   \node at (axis cs: \ver, -0.05) {\footnotesize$V$}; 
   \node at (axis cs: \pone, -0.05) {\footnotesize$P_1$}; 
   \node at (axis cs: \ptwo, -0.05) {\footnotesize$P_2$}; 
   \node at (axis cs: -0.05, \tzero) {\footnotesize $t_0$}; 
   \node at (axis cs: -0.1, \tone) {\footnotesize $t_0+t_1$}; 
   \node at (axis cs: -0.1, \tthree) {\footnotesize$3t_0$}; 
   \node at (axis cs: -0.1, \ttwo) {\footnotesize$2t_0 + t_1$}; 
  \end{axis}
\end{tikzpicture}
  \caption{Space-time diagram for the one-round two-prover delegation protocol. The dotted lines correspond to the
  position of the two provers and the dashed
  line corresponds to the
  position of the verifier. The solid black line corresponds to the messages
  exchanged during the honest interaction of the protocol: the verifier sends
  a message to each prover which arrives at time $t_0$; the provers
  perform some computation whose time complexity is upper-bounded by $t_1 \ll
  t_0$; and finally the provers answer back to the
  verifier. The red line
  corresponds to a malicious prover that sends a message to the other prover as soon as
  the verifier's message arrives. The green area corresponds to the period in time that the
  verifier has the guarantee that the provers have not communicated, assuming NSS.}
  \label{fig:spacetime}
\end{figure}

\begin{lemma}
  \label{lem:relativistic}
  Every classical one-round  protocol for delegation of quantum computation with two non-communicating provers can be
  converted into a delegation protocol where the provers are allowed to communicate at
  most as fast as speed of light.
\end{lemma}
\begin{proof}
Let us assume that the provers can be forced to stay at an arbitrary position in
  space.
  We use the unit system, where the speed of light is $c=1$, in order to
  simplify the equations. The provers and the verifier
  are placed in a line, with the verifier at position $0$, 
  the first prover at position $-t_0$ and the second
  prover at position $ t_0$. The value $t_0$ is chosen to be much larger than
  an upper-bound of the time complexity of the provers in the honest protocol, denoted by  $t_1$.
  
  The message from the verifier to the provers arrive at time $t_0$. The
  expected time for the provers to perform the computation and answer back is
  $t_1 + t_0$, whereas the time it takes for the provers to send a
  message to each other is $2t_0 \gg t_1 + t_0$. Therefore, the
  verifier aborts whenever the provers' answers arrive after time $3t_0$ since
  the security of the protocol is compromised. We depict this protocol in
  \cref{fig:spacetime}.

  The previous argument works if we can rely on the position of the provers, but
  in some settings we require the protocol to be robust against malicious
  provers that may move in order to receive the verifier's messages earlier,
  being able  to collude and break the security of the protocol. This
  attack is depicted in \cref{fig:attack}.  We can prevent this type of attacks
  by adding two trusted agents at the expected position of the provers.  The
  verifier sends the message to the agents through a secure channel and the
  agents transmit the information to the provers. The provers perform their
  computation and then report their answers to the agents, who transmit the messages
  to the verifier. The secure channel can be
  implemented with the verifier and the agents sharing one-time pad keys, and
  then messages can be exchanged in a  perfectly secure way. This protocol is
  depicted in \cref{fig:solve-attack}. 
\end{proof}

The fact that the provers communicate after the verifier receives their responses
is not harmful since this cannot change the output of the protocol.

\begin{figure}[h]
~\\
~\\
~\\
~\\
~\\
~\\
~\\
~\\
~\\
~\\
~\\
~\\
\begin{subfigure}{0.4\textwidth}
\begin{tikzpicture}[scale = 0.9, transform canvas={scale=0.9}]
  \pgfmathsetmacro{\pone}{0.05}
  \pgfmathsetmacro{\ponep}{0.25}
  \pgfmathsetmacro{\ptwo}{0.55}
  \pgfmathsetmacro{\ptwored}{0.546}
  \pgfmathsetmacro{\ver}{0.3}
  \pgfmathsetmacro{\tzero}{0.25}
  \pgfmathsetmacro{\tone}{0.4}
  \pgfmathsetmacro{\tonered}{0.396}
  \pgfmathsetmacro{\ttwo}{0.65}
  \pgfmathsetmacro{\ttwored}{0.646}
  \pgfmathsetmacro{\tthree}{0.75}
  \pgfmathsetmacro{\tattack}{0.05}
  \pgfmathsetmacro{\tattacktwo}{0.35}
  
  \begin{axis}[ymin=-0.1,ymax=0.8,xmax=0.6,xmin=-0.2,xticklabel=\empty,yticklabel=\empty,
               axis lines = middle,xlabel=$x$,ylabel=$t$,label style =
               {at={(ticklabel cs:1.1)}},
               xtick={\pone, \ptwo, \ver, \ponep}, ytick={\tzero, \tone, \ttwo}]
\addplot [dotted] coordinates {(\pone, 0) (\pone, 1)};
\addplot [dotted] coordinates {(\ptwo, 0) (\ptwo, 1)};
\addplot [dashed] coordinates {(\ver, 0) (\ver, 1)};
\addplot [thick] coordinates {(\ver, 0) (\ptwo, \tzero)};
\addplot [thick] coordinates {(\ver, 0) (\pone, \tzero)};
\addplot [thick] coordinates {(\pone, \tzero) (\pone, \tone)};
\addplot [thick] coordinates {(\ptwo, \tzero) (\ptwo, \tone)};
\addplot [thick] coordinates {(\pone, \tone) (\ver, \ttwo)};
\addplot [thick] coordinates {(\ptwo, \tone) (\ver, \ttwo)};
\addplot [red, thick] coordinates {(\ponep, \tattack) (\ptwored, \tattacktwo)};
\addplot [red, thick] coordinates {(\ptwored, \tattacktwo) (\ptwored, \tonered)};
\addplot [red, thick] coordinates {(\ptwored, \tonered) (\ver, \ttwored)};

   \node at (axis cs: -0.1, 2) {};
   \node at (axis cs: \ver, -0.05) {\footnotesize$V$}; 
   \node at (axis cs: \pone, -0.05) {\footnotesize$P_1$}; 
    \node at (axis cs: \ponep, -0.05) {\footnotesize \textcolor{red}{$P'_1$}}; 
   \node at (axis cs: \ptwo, -0.05) {\footnotesize$P_2$}; 
   \node at (axis cs: -0.05, \tzero) {\footnotesize $t_0$}; 
   \node at (axis cs: -0.05, \tzero) {\footnotesize $t_0$}; 
   \node at (axis cs: -0.1, \ttwo) {\footnotesize$2t_0 + t_1$}; 
  \end{axis}
\end{tikzpicture}
      \caption{}
      \label{fig:attack}
\end{subfigure}
\quad \quad 
\quad \quad 
\begin{subfigure}{0.4\textwidth}
  \pgfmathsetmacro{\pone}{0.05}
  \pgfmathsetmacro{\ptwo}{0.55}
  \pgfmathsetmacro{\ptwored}{0.546}
  \pgfmathsetmacro{\ver}{0.3}
  \pgfmathsetmacro{\verone}{0.1}
  \pgfmathsetmacro{\vertwo}{0.5}
  \pgfmathsetmacro{\tzero}{0.25}
  \pgfmathsetmacro{\tone}{0.4}
  \pgfmathsetmacro{\tonered}{0.396}
  \pgfmathsetmacro{\ttwo}{0.65}
  \pgfmathsetmacro{\ttwored}{0.646}
  \pgfmathsetmacro{\tthree}{0.75}
  \pgfmathsetmacro{\tattack}{0.2}
  \pgfmathsetmacro{\tattacktwo}{0.6}
  
\begin{tikzpicture}[scale = 0.9, transform canvas={scale=0.9}]
  \begin{axis}[ymin=-0.1,ymax=0.8,xmax=0.6,xmin=-0.2,xticklabel=\empty,yticklabel=\empty,
               axis lines = middle,xlabel=$x$,ylabel=$t$,label style =
               {at={(ticklabel cs:1.1)}},
               xtick={\pone, \ptwo, \ver}, ytick={\tzero, \tone, \ttwo}]
    \addplot[thick,color=blue,opacity=0,mark=o,fill=green, 
                    fill opacity=0.2]coordinates  {
            (0, 0) 
            (0, 0.75)
            (1, 0.75)
            (1, 0)  };
\addplot [dotted] coordinates {(\pone, 0) (\pone, 1)};
\addplot [dotted] coordinates {(\ptwo, 0) (\ptwo, 1)};
\addplot [dashed] coordinates {(\ver, 0) (\ver, 1)};
\addplot [double, thick] coordinates {(\ver, 0) (\ptwo, \tzero)};
\addplot [double, thick] coordinates {(\ver, 0) (\pone, \tzero)};
\addplot [thick] coordinates {(\pone, \tzero) (\pone, \tone)};
\addplot [thick] coordinates {(\ptwo, \tzero) (\ptwo, \tone)};
\addplot [double, thick] coordinates {(\pone, \tone) (\ver, \ttwo)};
\addplot [double, thick] coordinates {(\ptwo, \tone) (\ver, \ttwo)};
\addplot [red, thick] coordinates {(\pone, \tzero) (\ptwo, \tthree)};

   \node at (axis cs: \ver, -0.05) {\footnotesize$V$}; 
    \node at (axis cs: \pone, -0.05) {\footnotesize$P_1, V_1$}; 
    \node at (axis cs: \ptwo, -0.05) {\footnotesize$V_2, P_2$}; 
   \node at (axis cs: -0.05, \tzero) {\footnotesize $t_0$}; 
   \node at (axis cs: -0.05, \tzero) {\footnotesize $t_0$}; 
   \node at (axis cs: -0.1, \ttwo) {\footnotesize$2t_0 + t_1$}; 
  \end{axis}
\end{tikzpicture}
      \caption{}
      \label{fig:solve-attack}
\end{subfigure}
  \caption{
    In \ref{fig:attack}, we illustrate the attack where the first prover changes his position in order to
      receive the message from the verifier earlier,  being able to
      communicate with the second prover in order to affect his answers to the
      verifier. The black lines correspond to the expected protocol, while the
      red ones correspond to the attack.
      In \ref{fig:solve-attack}, we show how to  prevent this attack: the
      verifier places agents in the same position where the provers
      should be. The verifier communicates with these agents through a
      secure channel, and the agents
      communicate with the provers. In this case, we can see that the
      the verifier has again a guarantee that the provers have not communicated
      up to some point in the protocol, here depicted in green.
  }
  \label{fig:spacetime-attacks}
\quad \quad 
\quad \quad 
\end{figure}

\end{document}